\newacronym{dp}    {DP}    {defect prediction}
\newacronym{fd}    {FD}    {fault distribution}
\newacronym{ft}    {FT}    {fault tree}
\newacronym{loc}    {LOC}    {lines of code}
\newacronym{sp}    {SP}    {starting point}
\newacronym{srgm}    {SRGM}    {software reliability growth model}
\newacronym{tra}    {TRA}    {test resource allocation}
\newacronym{trap}    {TRAP}    {test resource allocation problem}
\newacronym{qa}    {QA}    {quality assurance}
\newacronym{qcl}    {QCL}    {Quantitative Confidence Logic}
\newacronym{cps}    {CPS}    {cyber-physical system}
\newcommand{\Astrahl}{\texttt{Astrahl}\xspace}
\newcommand{\eg}{e.\,g.\xspace}
\newcommand{\ie}{i.\,e.\xspace}
\renewcommand{\phi}{\varphi}
\renewcommand{\epsilon}{\varepsilon}
\DeclareMathOperator{\Meas}{Meas}
\newcommand{\sep}{\,|\,}
\newcommand{\ens}[1]{\left\{{#1}\right\}}
\newcommand{\enscomp}[2]{\left\{{#1}\,\middle|\,{#2}\right\}}
\newcommand{\R}{\mathbb{R}}
\newcommand{\B}{\mathbb{B}}
\newcommand{\C}{\mathbb{C}}
\newcommand{\Prop}{\mathrm{Prop}}
\newcommand{\Form}{\mathrm{Form}}
\newcommand{\logic}{\ac{qcl}\xspace}
\newcommand{\software}{\textnormal{software}}
\newcommand{\hardware}{\textnormal{hardware}}
\newlength{\botlength}
\newcommand{\undet}{\bot\hspace{-\botlength}\top}
\newcommand{\imply}{\Rightarrow}
\newcommand{\smsp}{\Omega}
\newcommand{\sfld}{\mathfrak{F}}
\newcommand{\meas}{P}
\newcommand{\mbool}{2}
\newcommand{\prob}[1]{\meas[#1]}
\newcommand{\probin}[2]{\prob{#1\in #2}}
\newcommand{\probeq}[2]{\prob{#1=#2}}
\newcommand{\semconf}[4]{#3\colon #4}
\newcommand{\axiom}{\textnormal{(\textit{ax})}}
\newcommand{\impi}{\textnormal{($\imply_\textnormal{I}$)}}
\newcommand{\impel}{\textnormal{($\imply_\textnormal{E,\textit{l}}$)}}
\newcommand{\imper}{\textnormal{($\imply_\textnormal{E,\textit{r}}$)}}
\newcommand{\andi}{\textnormal{($\wedge_\textnormal{I}$)}}
\newcommand{\ori}{\textnormal{($\vee_\textnormal{I}$)}}
\newcommand{\unk}{\textnormal{(\textit{unk})}}
\newcommand{\topr}{\textnormal{($\top_\textnormal{I}$)}}
\newcommand{\botr}{\textnormal{($\bot_\textnormal{I}$)}}
\newcommand{\negi}{\textnormal{($\neg_\textnormal{I}$)}}
\newcommand{\sem}[2]{\ensuremath{\llbracket{#1}\rrbracket_{#2}}}
\renewcommand\paragraph{\@startsection{paragraph}{4}{\z@}%
    {-12\p@ \@plus -4\p@ \@minus -4\p@}%
    {-0.5em \@plus -0.22em \@minus -0.1em}%
    {\normalfont\normalsize\bfseries}}
\begin{document}

\title{Architecture-Guided Test Resource Allocation Via Logic
\thanks{The authors are supported by ERATO HASUO Metamathematics for
Systems Design Project (No. JPMJER1603).}}

\author{Clovis Eberhart\inst{1,2}
\and
Akihisa Yamada\inst{3}
\and
Stefan Klikovits\inst{1}
\and
Shin-ya Katsumata\inst{1}
\and
Tsutomu Kobayashi\inst{4,1}
\and
Ichiro Hasuo\inst{1}
\and
Fuyuki Ishikawa\inst{1}
}

\authorrunning{C. Eberhart et al.}

\institute{National Institute of Informatics, 2-1-2 Hitotsubashi,
Chiyoda-ku, Tokyo 101-8430, Japan\\
\email{\{eberhart,klikovits,s-katsumata,t-kobayashi,hasuo,f-ishikawa\}@nii.ac.jp}
\and
Japanese-French Laboratory for Informatics, Tokyo, Japan
\and
National Institute of Advanced Industrial Science and Technology,
2-3-26, Aomi, Koto-ku, Tokyo 135-0064, Japan\\
\email{akihisa.yamada@aist.go.jp}
\and
Japan Science and Technology Agency,
4-1-8, Honcho, Kawaguchi-shi, Saitama 332-0012, Japan
}

\maketitle
\setcounter{footnote}{0}

\begin{abstract}
  We introduce a new logic named Quantitative Confidence Logic (QCL)
  that quantifies the level of confidence one has in the conclusion
  of a proof.
  By translating a fault tree representing a system's architecture
  to a proof, we show how to use QCL to give a solution to the
  \acl{trap} that takes the given architecture into account.
  We implemented a tool called \Astrahl{} and compared our results to
  other testing resource allocation strategies.

  \keywords{Reliability \and
  Test Resources Allocation \and
  Logic.}
\end{abstract}

\section{Introduction}
\label{sec:intro}
With modern systems growing in size and complexity, asserting their
correctness has become a paramount task, and
despite advances in the area of formal verification, testing remains a
vital part of the system life cycle due to its versatility,
practicality, and low entry barrier.
Nevertheless,
as test resources are limited, it is an important task to most
effectively allocate them among system components, a problem commonly
known as the \ac{trap} (see \eg~\cite{pietrantuono2020testing}).
In this paper we formulate the \ac{trap} as follows: given a system
that consists of multiple components and a certain, limited amount
of test resources (\eg time or money), how much of the budget should
we allocate to each component in order to minimise the chance of
system failure, \ie to increase its reliability.

We propose an approach to the \ac{trap}, based on a novel logic system
called \logic.
\logic differs from classical logic, in that a proof tree does not
conclude truth from assumptions, but rather analyses how
\emph{confidence} is propagated from assumptions to conclusions.
We prove key soundness properties of \ac{qcl} with respect to a
probabilistic interpretation (Section~\ref{sec:qcl}).

Learning a new logic is a hard task, especially to practitioners.
Thus we do not demand users to learn \ac{qcl},
but take it as an intermediate language to which already accepted representations of system architectures are translated.
As an example, we show how to translate the well-known concept
of \acp{ft} into \ac{qcl} proof trees (Section~\ref{sec:trap}).

We then formulate the \ac{trap} as an optimisation problem with
respect to a given \ac{ft}, translated to a \ac{qcl} proof tree
(Section~\ref{sec:optim}).
Here we allow users to specify a \emph{confidence function} for each
component, which describes how an amount of spent test resources
relates to an increase in that component's reliability.

We implement our approach as a tool (\Astrahl) that takes as input an
\ac{ft}, a confidence function for each component, the current
confidence in each component, and the total amount of test resources
the user plans to spend.
Then \Astrahl outputs a proposed allocation of the test resources over
components.
We validate our approach through experiments (\Cref{sec:exp}).

An advantage of our method is that it is not tied to a fixed confidence
function, and can therefore assign different confidence functions to
different components.
This will be useful in modelling systems with highly heterogeneous
components such as \acp{cps}; for example, hardware components would demand more effort to increase confidence than software, and would depend on the type of components or their vendors.

We expect \Astrahl to be used continuously in a system's development:
the confidence in each component increases as they pass more tests.
By rerunning \Astrahl with updated component confidences, we obtain a
test resource allocation strategy.
We expect our approach to be promising in product line development,
where a number of system configurations are simultaneously developed
with common components.
In such a situation, updating confidence in a component for one system
positively impacts the test strategies for other systems.

\subsection{Related Work}

\subsubsection{Test Resource Allocation}

Most approaches to the \ac{trap} use \acp{srgm} such as models based
on Poisson Process (\eg~\cite{goel1979time}) to capture
the relationship between testing efforts and reliability growth,
or in our words, confidence functions.
A typical \ac{trap} approach formulates the problem using a particular
\ac{srgm} and provides a solution using exact
optimisation~\cite{huang2006optimal} or a metaheuristic such as a
genetic algorithm~\cite{zhang2017constraint}.
A challenge in this area is to take the structure (inter-module
relations) of the target system into account; existing
studies consider particular structures such as parallel-series
architecture~\cite{zhang2017constraint} or Markovian
architecture~\cite{pietrantuono2010software}.
In addition, there is high demand for dynamic allocation methods
(\eg \cite{carrozza2014dynamic})
because in practice \acp{srgm}, system structures, and testing
processes often become different from those planned at first.
Our approach has multiple beneficial features over the existing approaches: (1) it is
independent of particular \acp{srgm} and optimisation strategies, (2)
it can take complex structure into account using \acp{ft}, and (3) it
can be used for dynamic allocation.

\subsubsection{Fuzzy Logics}

Fuzzy logics~\cite{hajek2013metamathematics} is a branch of logic
interested in deductions where Boolean values are too coarse.
In its standard semantics~\cite{esteva2001monoidal}, formulas are
given a numeric truth value in the interval $[0,1]$, where $0$ represents falsity
and $1$ truth.
These numerical values can be used to represent the confidence one has
in an assertion: we give high values to propositions we are confident
are true, and low values to those we are confident are false.
This is slightly different from our approach, where $1$ corresponds to
confidence (either in truth or falsity) and $0$ to absence of
knowledge.

\subsubsection{Dempster-Shafer Theory}

Dempster-Shafer theory~\cite{dempster1967upper,shafer1976mathematical}
is a mathematical theory of belief.
One of its characteristic features is that if one has a belief $b$ in
an assertion, they can have any belief $b' \leq 1-b$ in its negation,
contrary to traditional Bayesian models, where it is necessarily
$1-b$.
This feature is crucial to model uncertainty due to absence of
knowledge, and Dempster-Shafer theory has been used to model
reliability in engineering contexts~\cite{sallak2013reliability}.
Our approach draws inspiration from fuzzy and three-valued logics to
model this feature.

\subsubsection{Fault Tree Analysis}

Fault trees~\cite{vesely1981fault} are tree structures that represent
how faults propagate through a system.
In qualitative \ac{ft} analysis, they are used to determine root
causes~\cite{ericson1999fault}.
In quantitative \ac{ft} analysis, basic events are assigned fault
probabilities, and the overall system failure probability is given by
propagating the fault probabilities through the fault tree.
Our approach uses the same ingredients (assigning numeric values to
basic events and propagating them through the fault tree), but
repurposed to solve another problem.

\section{Quantitative Confidence Logic}
\label{sec:qcl}
\newcommand{\snote}[1]{\textcolor{blue}{#1}}

This section introduces \logic, which we use throughout this paper.
A \logic formula is a standard propositional formula $\phi$
equipped with a pair of reals,
written $\phi \colon (t,f)$,
where $t \in [0,1]$ represents our confidence that
$\phi$ holds and $f \in [0,1]$ the confidence that $\phi$ does not
hold, so $t+f$ represents how much confidence one has about $\phi$,
and $1-t-f$ lack of confidence about $\phi$.
Absolute confidence is represented by $1$ and total absence of
knowledge by $0$, so that $\phi \colon (1,0)$ means full trust
that $\phi$ holds, $\phi \colon (0,0)$ means we have no knowledge
about $\phi$, and $\phi \colon (1/2,1/2)$ represents the fact that we
know with very high confidence that $\phi$ holds with 50\% chance.

In Section~\ref{subsec:qcl:proofs} we define the syntax of \logic and introduce its proof rules.
We also show how to derive standard proof rules from them.
In Section~\ref{subsec:qcl:prob}
we give a probabilistic interpretation of \logic formulas and show
that \logic proof rules are sound with respect to it.
We also explain how the particular shapes of the rules serve as the
basis of our optimisation algorithm (\Cref{sec:optim}).

\subsection{Syntax and proof rules of QCL}\label{subsec:qcl:proofs}

This section introduces \ac{qcl}, starting with \emph{formulas with
confidences}, then sequents, and finally proof rules.

\begin{definition}
  Given an arbitrary set of atomic propositions $\Prop$, formulas are
  defined inductively by the following grammar:
  \begin{equation*}
    \phi \Coloneqq
    A \sep
    \top \sep
    \bot \sep
    \phi \imply \phi\rlap{,}
  \end{equation*}
  for $A \in \Prop$.
  We denote by $\Form$ the set of all formulas.
\end{definition}

As in classical logic, negation, disjunction, and conjunction can be defined by
syntactic sugar $\neg \phi \equiv \phi \imply \bot$, $\phi \lor \psi
\equiv \neg \phi \imply \psi$, and $\phi \land \psi \equiv \neg (\neg
\phi \lor \neg \psi)$.

We now equip such formulas with confidence.
We define the \emph{space of confidences} as $\C = \enscomp{(t,f) \in
[0,1]^2}{t+f \leq 1}$.
\begin{definition}
  A \emph{formula with confidence} is a pair $(\phi,c) \in \Form
  \times \C$, written $\phi \colon c$.
  For $\phi \colon (t,f)$, we call $t$ the \emph{true} confidence
  in $\phi$ and $f$ its \emph{false} confidence.
\end{definition}
Intuitively, a formula with confidence $\phi \colon (t,f)$ represents
the fact that our confidence that $\phi$ holds is $t$, and our
confidence that $\phi$ does not hold is $f$.
This should mean that the chance that $\phi$ holds is at least $t$,
and the chance that it does not is at least $f$.
Another way to look at confidences is \emph{intervals of probability}.
Each confidence $(t,f)$ bijectively determines a sub-interval
$[t,1-f]$ of $[0,1]$.
Then a formula with confidence $\phi:(t,f)$ represents that the
probability of $\phi$ being true is within the interval $[t,1-f]$.
We make this intuition more concrete in Section~\ref{subsec:qcl:prob},
where we give an interpretation of \logic in terms of probabilities.

Equipping formulas with numeric values is reminiscent of fuzzy
logics\footnote{Here, we mean fuzzy logics interpreted in $[0,1]$.}.
To explain the fundamental difference between our approach and fuzzy
logics, let us consider two orders on $\C$.
Let $(t,f) \sqsubseteq (t',f')$ if $t \leq t'$ and $f \leq f'$; we
call $\sqsubseteq$ the \emph{confidence order}, as $c \sqsubseteq c'$
holds exactly when $c'$ represents more confidence (both true and
false) than $c$.
Our approach is centred around $\sqsubseteq$, since we are interested
in ``how confident'' we are in an assertion.
Similarly, let $(t,f) \leq (t',f')$ if $t \leq t'$ and $f \geq f'$; we
call $\leq$ the \emph{truth order}, as $c \leq c'$ intuitively means
that $c'$ is ``more true'' than $c$.
Fuzzy logics is centred around the truth order $\leq$ (especially on
elements of the form $(t,1-t)$), as it is a logic about how true
assertions are.

One way to link our approach to fuzzy logics is via
three-valued logics~\cite{bergmann2008introduction}.
Fuzzy logics can be seen as equipping formulas with a numeric truth
value $t \in [0,1]$ and a falsity value $f \in [0,1]$ such that $t+f =
1$.
This is equivalent to equipping formulas only with a numeric value $t
\in [0,1]$, while $f$ is the implicit difference to $1$.
With three-valued logics, formulas have three possible outcomes:
truth $\top$, falsity $\bot$, and uncertainty $\undet$, and each is
given a value $t$, $f$, and $u$, such that $t+f+u = 1$, which is
equivalent to equipping them with $(t,f) \in [0,1]^2$ such that $t+f
\leq 1$, while $u$ is implicit.
Dempster-Shafer theory is similar, with $t$ representing our belief in
$\phi$, $f$ our belief in $\neg \phi$, and $u$ our degree of
uncertainty.

We now introduce the sequents on which \logic operates.
\begin{definition}
  A \emph{sequent} in \logic, written $\Gamma \vdash \phi \colon c$,
  consists of a finite set $\Gamma \subseteq \Form\times\C$ of
  formulas with confidences (written as a list), a formula $\phi \in
  \Form$, and a confidence $c\in\C$.
\end{definition}
Such a sequent intuitively means that, if all formulas with
corresponding confidences in $\Gamma$ hold,
then $\phi$ holds with confidence $c$ as well.

\begin{definition}
\begin{figure}
  \framebox[\textwidth]{
  \begin{mathpar}
    \inferrule*[right=\axiom]{ }{\Gamma, \phi \colon (t,f) \vdash
      \phi \colon (t,f)}

    \inferrule*[right=\unk]{ }{\Gamma \vdash \phi \colon (0,0)}

    \inferrule*[right=\topr]{ }{\Gamma \vdash \top \colon (1,0)}

    \inferrule*[right=\botr]{ }{\Gamma \vdash \bot \colon (0,1)}

    \inferrule*[right=\impi]{\Gamma \vdash \phi \colon (t,f) \\
      \Gamma \vdash \psi \colon (t',f')}{\Gamma \vdash \phi \imply
      \psi \colon (f + t' - f t',t f')}

    \inferrule*[right=\impel{\rm\ if $t' \neq 0$ and $f' \neq 1$}]{\Gamma \vdash \phi \imply \psi \colon (t,
      f) \\ \Gamma \vdash \phi \colon (t',f')}{\Gamma \vdash \psi
      \colon \left(1-\frac{1-t}{t'},\frac{f}{1-f'}\right)}

    \inferrule*[right=\imper{\rm\ if $t' \neq 1$ and $f' \neq 0$}]{\Gamma \vdash \phi \imply \psi \colon (t,
      f) \\ \Gamma \vdash \psi \colon (t',f')}{\Gamma \vdash \phi
      \colon \left(\frac{f}{1-t'},1-\frac{1-t}{f'}\right)}
  \end{mathpar}
  }
  \caption{Proof rules of \acl{qcl}}
  \label{fig:rules}
\end{figure}

\emph{Proof trees} in \logic are built from the \logic proofs rules given in
\Cref{fig:rules}.
There, the notation $\chi \colon (t'',f'')$ in conclusions is a
shorthand for
$$\chi \colon (\min(\max(t'',0),1), \min(\max(f'',0),1))\rlap{.}$$
\end{definition}

Note that $(t'',f'') \in \C$ because $t'' + f'' \leq 1$ in all rules.
Note also that rules $\impel$ and $\imper$ are conditioned so that
confidence values do not contain indeterminate forms $0 / 0$.
These side conditions correspond to the facts that, if $\phi \imply \psi$ is true,
$\phi$ being false gives no information about $\psi$, and $\psi$ being
true gives no information about $\phi$.

$\axiom$, $\topr$, and $\botr$ are self-explanatory, while
$\unk$ states that anything can be proved, but with null confidence.
One way to think about $\impi$ is that, if $\phi$ and $\psi$ are
independent (in a way made precise in Section~\ref{subsec:qcl:prob}),
$\phi$ holds with probability in $[t, 1-f]$ and $\psi$ with
probability in $[t', 1-f']$, then $\phi \imply \psi$ holds with
probability in $[f + t' - ft', t f']$.
The elimination rules are designed similarly.

\begin{figure}
  \framebox[\textwidth]{
  \begin{mathpar}
    \inferrule*[right=\negi]{\Gamma \vdash \phi \colon (t,f)}{\Gamma
      \vdash \neg \phi \colon (f,t)}

    \inferrule*[right=\andi]{\Gamma \vdash \phi \colon (t,f) \and
      \Gamma \vdash \psi \colon (t',f')}{\Gamma \vdash \phi \land
      \psi \colon (t t', f + f' - f f')}

    \inferrule*[right=\ori]{\Gamma \vdash \phi \colon (t,f) \and
      \Gamma \vdash \psi \colon (t',f')}{\Gamma \vdash \phi \lor \psi
      \colon (t + t' - t t', f f')}
  \end{mathpar}
  }
  \caption{Derivable introduction rules}
  \label{fig:more_rules}
\end{figure}
From the rules in \Cref{fig:rules} and the encodings of negation,
disjunction, and conjunction, we can derive the introduction rules in
\Cref{fig:more_rules} (we can also derive elimination rules, but do
not discuss them here).
A point worth attention is that the shape of these rules is quite
unorthodox.
In particular, one should not need to have confidence in both
disjuncts to have confidence in a disjunction.
However, this unorthodox shape is exactly the reason why our solution
to the \ac{trap} (defined in Section~\ref{sec:optim}) works well, as
we show in \Cref{ex:trap}.
Moreover, we can derive a disjunction from a single disjunct, as:
\begin{center}
  \AxiomC{$\Gamma \vdash \phi \colon (t,f)$}
  \AxiomC{}
  \RightLabel{\unk}
  \UnaryInfC{$\Gamma \vdash \psi \colon (0,0)$}
  \RightLabel{\ori,}
  \BinaryInfC{$\Gamma \vdash \phi \lor \psi \colon (t + 0 - t \cdot
    0,f \cdot 0) = (t,0)$}
  \DisplayProof
\end{center}
which represents (one of) the usual disjunction introduction
rules.

\begin{remark}
  \ac{qcl} rules are different from those of classical logic and do
  not extend them.
  Because the rules' design is strongly centred around how confidence
  flows from hypotheses to conclusions and based on independence of
  hypotheses, \ac{qcl} cannot prove sequents such as $\emptyset \vdash
  A \imply A \colon (1,0)$.
  How to allow reasoning about both confidence and truth in the same
  logic is left for future work.
\end{remark}

\begin{example}
  \label{ex:cps}
  The correctness of a system with software and hardware components is
  based on the correctness of
  both components.
  A classical logical proof that the system is correct assuming both
  its software and hardware are correct is:
  \begin{center}
    \AxiomC{}
    \RightLabel{\axiom}
    \UnaryInfC{$\Gamma \vdash \software$}
    \AxiomC{}
    \RightLabel{\axiom}
    \UnaryInfC{$\Gamma \vdash \hardware$}
    \RightLabel{\andi\rlap{,}}
    \BinaryInfC{$\Gamma \vdash \software \land \hardware$}
    \DisplayProof
  \end{center}
  where $\Gamma = \ens{\software,\hardware}$.
  By adding confidence to the formulas, we derive confidence in
  the assertion that the whole system is correct as a \logic proof:
  \begin{center}
    \AxiomC{}
    \RightLabel{\axiom}
    \UnaryInfC{$\Gamma \vdash \software \colon (0.5,0.2)$}
    \AxiomC{}
    \RightLabel{\axiom}
    \UnaryInfC{$\Gamma \vdash \hardware \colon (0.3,0.01)$}
    \RightLabel{\andi\rlap{.}}
    \BinaryInfC{$\Gamma \vdash \software \land \hardware \colon
      (0.15,0.208)$}
    \DisplayProof
  \end{center}
\end{example}

  If we assume the software system is built from components that are
  difficult to prove reliable (\eg machine learning algorithms),
  but much testing effort has been spent on it, then both true and
  false confidence may be high, as in the example above.
  On the contrary, if there are no reasons not to trust the hardware
  (\eg it is made of very simple, reliable components), but less
  testing effort has been spent on it, then both true and false
  confidence may be lower than those of the software system.

It may seem unnecessary to keep track of false confidence, since our
ultimate goal is to prove system reliability (and not unreliability).
However, there are several cases in which we may want to use it.
For instance, the calculation of how often a volatile system fails
can be translated to a false confidence problem.
Moreover, systems with failsafe mechanisms, \eg \emph{if $A$ works,
use module $B$, else use module $C$}, need to take the unreliability of $A$ into account.
Otherwise, the whole system's reliability could not be correctly expressed,
as it would only depend on $A$ and $B$.
Hence, the optimisation would ignore the reliability of $C$.

\begin{remark}
  In \Cref{fig:more_rules}, negation is an involution, and the reader
  familiar with fuzzy logics will have noticed the product T-norm and
  its dual probabilistic sum T-conorm in the rules $\andi$ and $\ori$.
  Negation is an involution of $[0,1]$ in fuzzy logics, and T-norms
  and T-conorms are the standard interpretations of conjunction and
  disjunction in fuzzy logics, which hints at a deep connection
  between our approach and fuzzy logics.
  However, implication is not interpreted as a residual, which again
  differentiates our approach from fuzzy logics.
\end{remark}

\subsection{Interpretation as Random Variables}
\label{subsec:qcl:prob}

In this section, we justify the \logic proof rules by giving formulas
a probabilistic semantics and showing that these rules are sound.

We start with some measure-theoretic conventions.
We write $\mbool$ to mean the discrete measurable space over the
two-point set $\B=\{\top,\bot\}$.
Boolean algebraic operations over $\mbool$ are
denoted by $\wedge$, $\Rightarrow$, etc.
(There should be no possible confusion with formulas.)
For a probability space $(\smsp,\sfld,\meas)$ (or $\smsp$ for
short), $\Meas(\smsp,\mbool)$ denotes the set of $\mbool$-valued
random variables.
A \emph{context} is a $\Prop$-indexed family of
$\mbool$-valued random variables, given as a function
$\rho \colon \Prop \to \Meas(\smsp,\mbool)$.
\begin{definition}
  Given a space $(\smsp,\sfld,\meas)$,
  we inductively extend a context $\rho$
  to a $\Form$-indexed family of
  $\mbool$-valued random variables
  $\bar{\rho} \colon \Form \to \Meas(\smsp,\mbool)$:
  $$
    \bar{\rho}(A) = \rho(A),
    \enspace \bar{\rho}(\top)(x) = \top,
    \enspace \bar{\rho}(\bot)(x) =  \bot,
    \enspace \bar{\rho}(\phi \imply \psi)(x) =
      \bar{\rho}(\phi)(x) \imply \bar{\rho}(\psi)(x).
  $$
  The \emph{semantics} $\sem{\phi}{\smsp,\rho}$
  of a formula $\phi$ in a space $\Omega$ and context $\rho$ is defined to be
  the probability $\probeq{\bar{\rho}(\phi)}{\top}$
  of $\phi$ being true under $\rho$.
  We say that $\semconf\smsp\rho\phi (t,f)$ holds
  in $\smsp$ and $\rho$
  if $\sem{\phi}{\smsp,\rho} \in [t,1-f]$.
  We say that a sequent $\Gamma\vdash\phi \colon c$ holds in
  $\smsp$ and $\rho$,
  if $\semconf\smsp\rho\phi c$ holds in $\smsp$ and $\rho$
  whenever all $\psi \colon c'$ in $\Gamma$ hold
  in $\smsp$ and $\rho$.
\end{definition}
In other words, the semantics of $\phi$ is the measure of the space on
which $\phi$ holds, and $\phi \colon (t,f)$ holds if $\phi$ is true on
at least $t$ of the space and false on at least $f$.

From here on, we only consider \emph{independent} contexts, \ie
$\rho$'s such that the random variables $\rho(A)$ are mutually
independent for all atomic propositions $A$.

The following lemma lifts independence of $\rho$ to $\bar{\rho}$.
\begin{lemma}
  \label{lem:indep}
  Let $\smsp$ be a space and $\rho$ an independent context.
  If $\phi$ and $\psi$ share no atomic
  propositions, then for all $S,T \subseteq \B$, the following
  holds:
  $$\prob{\bar{\rho}(\phi)\in S\wedge \bar{\rho}(\psi)\in T} =
  \probin{\bar{\rho}(\phi)}{S}\probin{\bar{\rho}(\psi)}{T}.$$
\end{lemma}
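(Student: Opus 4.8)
The plan is to reduce the claim to the mutual independence of the atomic random variables $(\rho(A))_{A\in\Prop}$ that defines an independent context. The guiding observation is that $\bar\rho(\phi)$ depends only on the atomic propositions that actually occur in $\phi$; hence if $\phi$ and $\psi$ share no atom, $\bar\rho(\phi)$ and $\bar\rho(\psi)$ are built from two disjoint blocks of mutually independent variables and are therefore independent, which is exactly the asserted factorisation.

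First I would make ``depends only on'' precise. Write $\mathrm{At}(\phi)$ for the finite set of atomic propositions occurring in $\phi$. I claim, by induction on $\phi$, that there is a Boolean function $g_\phi\colon\B^{\mathrm{At}(\phi)}\to\B$ such that $\bar\rho(\phi)(x)=g_\phi\big((\rho(A)(x))_{A\in\mathrm{At}(\phi)}\big)$ for every $x\in\smsp$. The cases $\phi=A$, $\phi=\top$ and $\phi=\bot$ are immediate from the definition of $\bar\rho$ (with $g_A$ the projection and $g_\top,g_\bot$ constant). For $\phi=\psi_1\imply\psi_2$ we have $\mathrm{At}(\psi_1),\mathrm{At}(\psi_2)\subseteq\mathrm{At}(\phi)$ and $\bar\rho(\phi)(x)=\bar\rho(\psi_1)(x)\imply\bar\rho(\psi_2)(x)$, so composing the inductively obtained $g_{\psi_1},g_{\psi_2}$ with the Boolean implication yields $g_\phi$.

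Next, since $\phi$ and $\psi$ share no atoms, $\mathrm{At}(\phi)\cap\mathrm{At}(\psi)=\emptyset$. For $S\subseteq\B$ I would decompose the event into ``atomic cylinders'':
$$\{\bar\rho(\phi)\in S\}=\bigsqcup_{b\in g_\phi^{-1}(S)}\ \bigcap_{A\in\mathrm{At}(\phi)}\{\rho(A)=b_A\},$$
a disjoint union over assignments $b\in\B^{\mathrm{At}(\phi)}$, and similarly for $\{\bar\rho(\psi)\in T\}$ over assignments $b'\in g_\psi^{-1}(T)$. Intersecting the two and using $\mathrm{At}(\phi)\cap\mathrm{At}(\psi)=\emptyset$, each resulting event is a conjunction of atomic events indexed by the disjoint union $\mathrm{At}(\phi)\sqcup\mathrm{At}(\psi)$. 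By mutual independence, $\prob{\bigcap_{A\in I}\{\rho(A)=b_A\}}=\prod_{A\in I}\prob{\rho(A)=b_A}$ for any finite $I$, so the probability of the joint event factors into a part over $\mathrm{At}(\phi)$ and a part over $\mathrm{At}(\psi)$. Summing over $b$ and $b'$ then splits the double sum as a product of two single sums, which are precisely $\prob{\bar\rho(\phi)\in S}$ and $\prob{\bar\rho(\psi)\in T}$.

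The only subtle point is the passage from independence of the individual atoms to independence of functions of disjoint blocks of atoms. In full generality this is the standard grouping property of independence and requires a $\pi$-system (Dynkin) argument; here, however, every $\rho(A)$ takes only the two values of $\B$ and each formula mentions finitely many atoms, so all the quantities above are finite sums of finite products. The grouping is then merely the combinatorial rearrangement of distributing a product over a sum and sorting the factors according to whether their index lies in $\mathrm{At}(\phi)$ or $\mathrm{At}(\psi)$. I therefore expect the main effort to be this bookkeeping rather than any genuine analytic obstacle.
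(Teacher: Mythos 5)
Your proof is correct, but it takes a genuinely different route from the paper's. The paper proves the lemma by strengthening the statement to finitely many formulas $\phi_1,\dots,\phi_m$ and $\psi_1,\dots,\psi_n$ and running a lexicographic induction on (max depth of the $\phi_i$'s, number of $\phi_i$'s of max depth, max depth of the $\psi_j$'s, number of $\psi_j$'s of max depth); the strengthening is what makes the induction stable, since peeling an implication $\phi = \phi_1 \imply \phi_2$ replaces one formula by two, so the naive two-formula statement does not reduce to itself. You instead prove the easy fact that $\bar\rho(\phi)$ is a Boolean function $g_\phi$ of the block of variables $(\rho(A))_{A\in\mathrm{At}(\phi)}$ and then invoke, in finitary form, the grouping property of independence: you expand each event as a disjoint union of atomic cylinders, factor each cylinder probability by mutual independence of the $\rho(A)$, and split the resulting double sum into a product of two sums. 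Your observation that the $\pi$-system (Dynkin) machinery is unnecessary here is right: since $\B$ is finite and $\mathrm{At}(\phi)$, $\mathrm{At}(\psi)$ are finite and disjoint, everything is a finite sum of finite products, and the argument is pure bookkeeping. What each approach buys: yours is the more standard and transparent one (functions of disjoint blocks of independent variables are independent) and proves the lemma in one shot; the paper's stays entirely at the level of formula syntax, never introducing the functions $g_\phi$ or the cylinder decomposition, which keeps the reasoning connective-by-connective at the cost of a delicate induction measure. The only detail worth making explicit in your write-up is the degenerate case $\mathrm{At}(\phi)=\emptyset$ (e.g., $\phi=\top$), where the empty intersection is all of $\smsp$ and the empty product is $1$; your decomposition handles it correctly, but it deserves a sentence.
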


\begin{proof}
  By strengthening the proposition to finitely many $\phi$'s and
  $\psi$'s, then by induction on: max depth of $\phi$'s, number of
  $\phi$'s of max depth, max depth of $\psi$'s, and number of $\psi$'s
  of max depth (with lexicographic order).
\end{proof}

We can finally prove soundness of the rules.
\begin{lemma}
  For all rules in \Cref{fig:rules}, formulas $\phi$ and $\psi$
  that share no atomic propositions, spaces $\smsp$, and independent
  contexts $\rho$, if the premise sequents hold in $\smsp$ and $\rho$,
  then so does the conclusion.
\end{lemma}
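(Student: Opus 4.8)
The plan is to verify soundness one rule at a time. Fix a space $\smsp$ and an independent context $\rho$; since the context $\Gamma$ is identical in the premises and the conclusion, it suffices to assume $\Gamma$ holds and then reason directly about the probabilities $p := \sem{\phi}{\smsp,\rho}$ and $q := \sem{\psi}{\smsp,\rho}$. Recall that a premise $\phi \colon (t,f)$ means exactly $p \in [t,1-f]$, and that the goal for each rule is to show that the probability of the conclusion formula lies in the interval coded by its confidence.

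The four leaf rules are immediate. $\axiom$ holds because its conclusion confidence is literally one of the hypotheses; $\unk$ holds because $\sem{\phi}{\smsp,\rho}\in[0,1]$ unconditionally; and $\topr$, $\botr$ hold because $\bar{\rho}(\top)$ and $\bar{\rho}(\bot)$ are the constant random variables, with probabilities $1$ and $0$.

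The crux is the introduction rule $\impi$, where I would first compute the probability of the conclusion. Since $\bar{\rho}(\phi\imply\psi)(x)=\top$ exactly when $\bar{\rho}(\phi)(x)=\bot$ or $\bar{\rho}(\psi)(x)=\top$, and since $\phi,\psi$ share no atomic propositions so that \Cref{lem:indep} lets the joint event factor, we obtain
\[
  \sem{\phi\imply\psi}{\smsp,\rho} = (1-p) + q - (1-p)q = 1 - p(1-q).
\]
Read as a function of $(p,q)$, the right-hand side is nonincreasing in $p$ and nondecreasing in $q$. Hence, over the box $p\in[t,1-f]$, $q\in[t',1-f']$, its minimum is attained at $(1-f,\,t')$, yielding $1-(1-f)(1-t')=f+t'-ft'$, and its maximum at $(t,\,1-f')$, yielding $1-tf'$. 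These are precisely the endpoints $[\,f+t'-ft',\,1-tf'\,]$ coded by the conclusion confidence $(f+t'-ft',\,tf')$.

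For the two elimination rules I would invert the same identity $r:=\sem{\phi\imply\psi}{\smsp,\rho}=1-p(1-q)$. For $\impel$, solving gives $q=1-(1-r)/p$, which for $p>0$ is nondecreasing in both $r\in[t,1-f]$ and $p\in[t',1-f']$; evaluating at the corners $(r,p)=(t,t')$ and $(1-f,1-f')$ gives the lower and upper bounds $1-(1-t)/t'$ and $1-f/(1-f')$, i.e.\ the confidence $(1-(1-t)/t',\,f/(1-f'))$. The rule $\imper$ is symmetric, obtained by solving $p=(1-r)/(1-q)$ and optimising over $r$ and $q$. In each case the side conditions ($t'\neq 0$, $f'\neq 1$ for $\impel$; $t'\neq 1$, $f'\neq 0$ for $\imper$) are exactly what guarantee that the relevant denominators are nonzero, so the inversions and the resulting confidence expressions are well defined. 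Finally, because $\sem{\phi}{\smsp,\rho}\in[0,1]$ for every $\phi$, applying the clamping $\min(\max(\cdot,0),1)$ to the conclusion confidences never excludes the true probability, so the containments derived above survive the clamping.

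The main obstacle is the introduction rule $\impi$ together with its two inversions: one must apply \Cref{lem:indep} correctly to reach the product form $1-p(1-q)$, and then track the monotonicity directions carefully so that each endpoint is read off from the correct corner of the box. Once this template is in place, the remaining rules follow by routine substitution.
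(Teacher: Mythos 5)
Your proposal is correct and takes essentially the same route as the paper, whose entire proof reads ``Simple computations relying on Lemma~\ref{lem:indep}'': your argument is precisely those computations spelled out, namely factoring $\sem{\phi\imply\psi}{\smsp,\rho}=1-p(1-q)$ via \Cref{lem:indep}, optimising over the confidence box for $\impi$, and inverting the identity for $\impel$ and $\imper$ with the side conditions guaranteeing well-definedness. Your additional observations---checking monotonicity directions to read off the correct corners, and noting that the clamping to $[0,1]$ never excludes the true probability---are exactly the details the paper leaves implicit.
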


\begin{proof}
  Simple computations relying on Lemma~\ref{lem:indep}.
\end{proof}

\begin{corollary}
  \label{cor:sound}
  If $\phi$ is linear (each atomic proposition appears at most once)
  and a proof $\pi$ of $\Gamma \vdash \phi \colon c$ only uses base
  rules and introduction rules, then
  $\Gamma \vdash \phi \colon c$ holds
  in all spaces $\smsp$
  and independent contexts $\rho$.
\end{corollary}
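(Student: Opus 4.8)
The plan is to argue by structural induction on the proof $\pi$, carrying as the induction hypothesis that the formula $\chi$ at the current node is linear. The whole argument turns on one elementary fact: if a compound formula is linear, then its immediate subformulas are linear and, for the binary connectives, have disjoint sets of atomic propositions. This disjointness is exactly the ``$\phi$ and $\psi$ share no atomic propositions'' side condition of the soundness lemma proved just above (for the rules of \Cref{fig:rules}), so linearity is precisely the bridge that lets me discharge that side condition at every step. Here I read ``base rules'' as the premise-free rules $\axiom$, $\unk$, $\topr$, $\botr$ and ``introduction rules'' as $\impi$ together with the derived $\negi$, $\andi$, $\ori$ of \Cref{fig:more_rules}.

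First I would settle the base cases, where $\pi$ is a single premise-free rule. For $\unk$ one checks $\sem{\phi}{\smsp,\rho} \in [0,1]$, which always holds; for $\topr$ and $\botr$ one has $\sem{\top}{\smsp,\rho} = 1 \in \{1\}$ and $\sem{\bot}{\smsp,\rho} = 0 \in \{0\}$; and for $\axiom$ the conclusion $\phi \colon (t,f)$ is itself one of the hypotheses in $\Gamma$, so it holds whenever all hypotheses do. None of these cases uses linearity.

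For the inductive step the last rule is an introduction rule, so $\chi$ is compound. Take $\impi$, where $\chi = \phi \imply \psi$ and the premises prove $\phi$ and $\psi$. Since each atomic proposition occurs at most once in $\chi$, both $\phi$ and $\psi$ are again linear, and no atom can occur in both (it would then occur twice in $\chi$). Linearity of the subformulas lets me apply the induction hypothesis to the two subproofs, so both premises hold in every space and independent context; disjointness of their atoms supplies the side condition of the soundness lemma, whose conclusion then gives the sequent for $\chi$. The derived rules $\negi$, $\andi$, $\ori$ are handled uniformly by expanding them, through the encodings of $\neg$, $\lor$, $\land$, into proofs built only from $\impi$ and $\botr$; because $\bot$ carries no atomic propositions, this expansion preserves linearity of the conclusion and disjointness of the atoms combined at each $\impi$ step, and since $\bar\rho$ sends $\neg$, $\lor$, $\land$ to the corresponding Boolean operations, the semantics of the expansion agrees with that of $\chi$, so soundness transfers back to the original proof.

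The step I expect to require the most care is confirming that linearity really is maintained as an invariant all the way down the proof. At every introduction step the conclusion is a compound whose immediate subformulas appear in the premises, so linearity of the conclusion descends to linearity, and for binary rules to mutual disjointness of atoms, of those subformulas; iterating, the root's linearity governs every node. This descent is exactly what fails for the elimination rules $\impel$ and $\imper$, whose premises mention the strictly larger formula $\phi \imply \psi$: linearity of the conclusion $\psi$ (or $\phi$) says nothing about that larger premise, and the eliminated formula may reintroduce atoms, breaking both disjointness and the induction hypothesis. That obstruction is precisely why the corollary restricts to base and introduction rules.
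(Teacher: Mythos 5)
Your proof is correct and follows precisely the argument the paper intends: the corollary is stated there without an explicit proof, the implicit justification being induction on the proof tree using the preceding soundness lemma, with linearity of the conclusion supplying the disjoint-atoms side condition at each introduction step and the rules of \Cref{fig:more_rules} handled as derived rules via the encodings. Your spelled-out treatment of the derived rules (unfolding $\neg$, $\lor$, $\land$ into $\imply$/$\bot$ while preserving linearity, since $\bot$ contributes no atoms) and your closing remark on why $\impel$ and $\imper$ must be excluded merely fill in details the paper leaves implicit; the route is the same.
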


\section{Translating System Architectures to Proofs}
\label{sec:trap}
In this section, we translate \acp{ft}~\cite{vesely1981fault} to
\ac{qcl} proof trees.
This allows us to use a system's architecture---modelled as an
\ac{ft}---in our solution to the \ac{trap}.
The way this translation works is close to quantitative fault tree
analysis, where \acp{ft} are equipped with fault probabilities.
In our translation, these fault probabilities are translated to
confidences in \ac{qcl} proofs.

\begin{definition}
  A \emph{fault tree} is a tree whose leaves are called \emph{basic
  events}, and whose nodes, called \emph{gate events}, are either
  \texttt{AND} or \texttt{OR} gates.
\end{definition}
Basic events represent independent components of a system, and the
tree structure represents how faults propagate through the system.
The system fails if faults propagate through the root node.
The usual definition of fault trees is more general than the one we
give here, but we use this one for simplicity.

\begin{figure}
  \begin{subfigure}[c]{0.39\textwidth}
    \centering
    \begin{tikzpicture}[circuit logic US]
      \matrix[
        column sep=0,row sep=7mm
      ]
      {
        & & & \node [rotate=90,or gate] (o) {}; \\
        & \node [rotate=90,and gate] (a1) {}; & & & & \node [rotate=90,and gate] (a2) {}; \\
        \node (n1) {$A$}; & & \node (n2) {$B$}; & & \node (n3) {$C$}; & & \node (n4) {$D$}; \\
      };
      \draw (a1.input 1) -- ++(down:3mm) -| (n1.north);
      \draw (a1.input 2) -- ++(down:3mm) -| (n2.north);
      \draw (a2.input 1) -- ++(down:3mm) -| (n3.north);
      \draw (a2.input 2) -- ++(down:3mm) -| (n4.north);
      \draw (o.input 1) -- ++(down:3mm) -| (a1.output);
      \draw (o.input 2) -- ++(down:3mm) -| (a2.output);
      \draw (o.output) -- ++(up:3mm);
    \end{tikzpicture}
    \caption{~}
    \label{fig:ft}
  \end{subfigure}
  \begin{subfigure}[c]{0.59\textwidth}
    \centering
    \AxiomC{}
    \UnaryInfC{$\Gamma \vdash A$}
    \AxiomC{}
    \UnaryInfC{$\Gamma \vdash B$}
    \BinaryInfC{$\Gamma \vdash A \lor B$}
    \AxiomC{}
    \UnaryInfC{$\Gamma \vdash C$}
    \AxiomC{}
    \UnaryInfC{$\Gamma \vdash D$}
    \BinaryInfC{$\Gamma \vdash C \lor D$}
    \BinaryInfC{$\Gamma \vdash (A \lor B) \land (C \lor D)$}
    \DisplayProof
    \caption{~}
    \label{fig:ft_to_pt}
  \end{subfigure}
  \caption{A fault tree and its translation as a proof tree}
\end{figure}
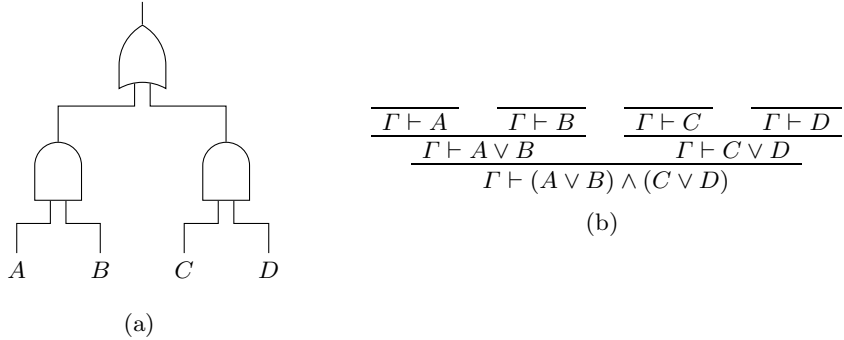

\begin{example}
  \label{ex:ft}
  The \ac{ft} in \Cref{fig:ft} represents a system composed of
  four basic components $A$, $B$, $C$, and $D$.
  For a fault to propagate through the system and become a failure,
  either both $A$ and $B$ have to fail, or both $C$ and $D$ (\eg, $A$
  and $B$ could be redundant components, doubled to increase
  reliability).
\end{example}

In \emph{quantitative fault tree analysis}~\cite{ericson1999fault},
failure probabilities are assigned to basic events, and they propagate
through event gates as if mutually independent.
In other words, if the failure probabilities of an \texttt{AND} gate's
inputs are $a$ and $b$, then its output failure probability is $a b$,
and $a + b - a b$ for an \texttt{OR} gate.

We translate fault trees to \logic proof trees as follows:
The set $\Prop$ of atomic propositions collects all the names of basic events.
$\Gamma$ consists of $A\colon(t_A,f_A)$ for each $A \in \Prop$.
\texttt{AND} gates are
translated to $\ori$ rules, and \texttt{OR} gates to $\andi$ rules.

The reason for this dualisation is straightforward: while a fault tree
represents how faults propagate, proof trees represent confidence in a
system's reliability, \ie, how \emph{absence of faults} propagates:
the true confidence in each atomic proposition in $\Gamma$ now represents
reliability of the component, and the true confidence in the
conclusion represents reliability of the whole system.

\begin{example}
  The translation of the fault tree of \Cref{fig:ft} is shown in
  \Cref{fig:ft_to_pt} (with confidences left out for readability).
  Here,
  $\Prop = \ens{A,B,C,D}$
  and the true confidence in the conclusion of the proof is
  $t_A t_B + t_C t_D - t_A t_B t_C t_D$.
  We can thus link increases in the reliability of components to
  increases in reliability of the whole system.
\end{example}

Note that the translation of a fault tree only uses base rules and
introduction rules (more precisely, only $\axiom$, $\andi$, and
$\ori$).
This is partly because we only consider \texttt{AND} and \texttt{OR} gates,
but more essentially, basic events of fault trees are considered atomic and thus there is no need to eliminate logical connectives.
Moreover, an assignment of failure probabilities to basic events
translates to a context $\rho$ in \ac{qcl} terms.
Since all basic events are considered independent in an \ac{ft}, their
translation gives an independent context $\rho$.
Therefore, the translation of a fault tree always verifies
Corollary~\ref{cor:sound}, and our interpretation as a \ac{qcl} proof
tree is sound for any assignment of failure probabilities to basic
events.
This means that, if the confidence in all basic components corresponds
to their reliability, then the confidence of the whole proof cannot
overshoot the whole system's reliability.

Since we translate fault probabilities to confidences, and fault
probabilities are directly linked to reliability, we may use
``reliability'' and ``confidence'' interchangeably in the following,
\eg when we feel that reliability conveys a better intuition than
confidence.

\section{Solving the Test Resource Allocation Problem}
\label{sec:optim}
In this section, we show how to optimise confidence in the conclusion
of a \ac{qcl} proof.
This gives a solution to the \ac{trap} through the
translation of \acp{ft} to \ac{qcl} proofs that was described in
\Cref{sec:trap}.

In order for this approach to be usable in practice, the user has to
be able to specify two input parameters: the \ac{ft} that
represents the system's architecture, and functions describing how confidence in each
component's reliability grows by spending resource on it.
\acp{ft} are commonly used in the industry, so modelling a system
using them should be no problem in practice.
We first describe the latter input parameter in
\Cref{subsec:optim:funcs} and then give our solution to the \ac{trap}
in \Cref{subsec:optim:trap}.

\subsection{Confidence Functions}
\label{subsec:optim:funcs}

Increasing confidence in a proof's conclusion
requires an increase in its premises' confidences.
The cost of increasing confidence may vary among premises;
when thinking in terms of systems (rather than proofs),
for instance, increasing trust in a machine learning algorithm may require more effort than improving
hardware reliability.
This is a well-known problem, for which many solutions have been
designed, especially \acp{srgm}, which are based on mathematical
modelling of faults~\cite{yamada1985software}.
Here, however, we do not choose a particular fault model and instead
introduce the following abstract notion, which makes the approach versatile.

\begin{definition}
  A \emph{confidence function} is a non-decreasing function
  $f : \R_+ \to \C$ (equipped with $\sqsubseteq$).
\end{definition}

The equality $f(r) = c$ means that after spending $r$ resources on a
formula, one will have confidence $c$ in the formula.
The monotonicity condition above enforces that, by spending more
resources, confidence should not decrease.

Note that $f(0)$ can be different from $(0,0)$, which corresponds to
the fact that engineers usually have some confidence in the components
they use.
This feature also makes it easy to use our approach in continuous
development, by using confidence functions $f_s(r) = f(r+s)$ where $s$
is the amount of resource that has been already spent to test a
component.

Note also that a confidence function may increase the false
confidence, theoretically capturing the fact that faults may found
by testing.
For an application on the \ac{trap}, however, we assume that faults
will be fixed and thus false confidence always stays at $0$.
In the following, we thus define confidence functions as increasing
functions $f \colon \R_+ \to [0,1]$, which represents the true
confidence, and assume the false confidence is always $0$.

\paragraph{Designing confidence functions}
Of course, expert knowledge on a component can be used to give a good
estimate of confidence functions, but other techniques, such as
\emph{defect prediction}~\cite{kamei2016defect}, exist for when
knowledge is limited.

When testing is the canonical way to increase confidence,
notions of test coverage serve as a good estimate of confidence.
If we have a hardware test suite of $n$ tests that achieves 100\% coverage
(but not enough budget to execute them all), and each test costs $r_0$ resources,
then the coverage achieved by spending $r$ resources in testing can be
estimated as the confidence $f(r) = \min(r / n r_0, 1)$.

If we do not have such a test suite, then a reasonable way to model confidence is to assume uniform random testing.
There we assume that each test covers a randomly sampled fraction $p$
of the input space, but parts of it might be already covered by
previous tests.
If running a test costs $r_0$ resources, then a good estimate of confidence function is $f(r) = 1 - (1-p)^{r/r_0}$.

If more is known about the component, then it is possible to design
confidence functions that are better suited for this component.
In particular, if we have some \emph{a priori} knowledge about fault
distributions, then it is possible to use
\acp{srgm}~\cite{yamada1985software} as confidence functions.

\subsection{The Optimisation Problem}
\label{subsec:optim:trap}

We now formulate the \ac{trap} as an optimisation problem in terms of
\logic as follows: given a \logic proof, a confidence function for
each premise,
and a resource budget to spend, how should we spend the budget on the
different premises to maximise confidence in the proof's conclusion?

We only consider the problem of optimising true confidence because
the application we are aiming at is about reliability.
However, with the same ingredients, we could define similar
optimisation problems.
For example, we could try to optimise total confidence $t+f$ under
limited resources, or try to minimise resources spent to reach
a given confidence objective (either in true or total confidence).

We begin with a simple observation: if $\phi_1 \colon c_1, \ldots,
\phi_n \colon c_n \vdash \phi \colon c$ is provable, then $c$ is a
non-decreasing function of the $c_i$'s (for the confidence order
$\sqsubseteq$).
Hence, increases in the $c_i$'s confidence lead to increases in $c$.

Because, for the translation of an \ac{ft}, the true confidence of
the conclusion has to be a function $f(t_1,\ldots,t_n)$ of the
true confidences of the hypotheses, if the confidence of each
hypothesis is given by applying a confidence function $f_i$ to an
amount of resources $r_i$ spent on that hypothesis, then the true
confidence of the conclusion is itself a function $f(f_1(r_1),\ldots,
f_n(r_n))$ of the amount of resources spent on the hypotheses.

The problem is thus the following: given an initial condition $r_1,
\ldots,r_n$, confidence functions $f_1,\ldots,f_n$, a proof of
$\enscomp{\phi_i \colon (c_i,0)}{i \in n} \vdash \phi \colon
(f(c_1,\ldots,c_n),-)$, and a budget $r$, maximise $f(f_1(r_1+r'_1),
\ldots, f_n(r_n+r'_n))$ under $r'_i \geq 0$ for all $i \in n$ and
$\sum_{i \in n} r'_1 \leq r$.

We thus reduce the \ac{trap} to a classic constrained
optimisation problem, which we can solve using well-known algorithms.
In our implementation, we use simulated
annealing~\cite{van1987simulated}, but any other method (such as
CMA-ES~\cite{hansen2001completely} or Lagrange
multipliers~\cite{bertsekas2014constrained}) would work too.

\begin{example}
  \label{ex:trap}
  Take the \ac{qcl} proof from \Cref{ex:ft}.
  Suppose that the confidence functions of components follow
  $f(r) = 1 - 1/2^r$,
  the amount of resources already spent on the components
  are $0$ for $A$,
  $5$ for $B$ and $C$, and $10$ for $D$, and we have a test resource
  budget of $10$. Then we want to maximise
  \begin{align*}
    & ( 1 - 1 / 2^a )     ( 1 - 1 / 2^{5+b} ) +
      ( 1 - 1 / 2^{5+c} ) ( 1 - 1 / 2^{10+d} ) \\
    & \quad - ( 1 - 1 / 2^a )     ( 1 - 1 / 2^{5+b} )
      ( 1 - 1 / 2^{5+c} ) ( 1 - 1 / 2^{10+d} )
  \end{align*}
  under the constraints $a,b,c,d \geq 0$, and $a+b+c+d \leq 10$.
  There are two major points to note here.
  First, due to the fact that $\ori$ requires both disjuncts to be
  proved, the optimisation will try to increase confidence of
  \emph{both} $A$ and $B$, rather than choose one.
  Second, since we take system structure into account, the algorithm
  can give $B$ and $C$ different budgets, even though they share the
  same initial confidence and confidence function.
\end{example}

Our approach has significant advantages over other \ac{trap}
solutions.
First, it makes use of the system's architecture, which is not the
case of most approaches.
Even other approaches that take system architecture into
account generally only consider simple architectures, such as
\emph{parallel-series architecture}~\cite{zhang2017constraint}.
These architectures can be directly translated to \acp{ft}, but the
converse is not possible without duplicating modules, which puts
artificial weight to these duplicated modules.
Moreover, we explained how to convert an \ac{ft} to a \ac{qcl} proof,
but our algorithm is not limited to \acp{ft} and would work on other
proofs.

Another advantage of our method is that it is not tied to any specific
confidence function.
The main advantage of this generality is that it allows the user to
pick different confidence functions for different components.
In particular, this approach should be helpful when allocating test
resources for \acp{cps}, where some components are software, while
others are hardware, which most likely require to be modelled using
different confidence functions.

\section{Experimental Results}
\label{sec:exp}
In this section, we describe the results of our experiments, showing
that our tool \Astrahl\footnote{The code and experimental data are
publicly available on
\url{https://github.com/ERATOMMSD/qcl_tap_2021}.} can increase system
reliability more consistently than others.
To demonstrate the tool's performance, we designed two experiments.
The first one compares \Astrahl's confidence gain to other \ac{tra} strategies.
The second, more involved experiment tests
whether the increase in confidence provided by \Astrahl is linked to
an increase in system reliability.
Given an \ac{ft} and confidence functions, we simulate existence of component faults,
before splitting a fixed testing budget according to different \ac{tra} strategies (one of which is our
approach).
We then mimic component testing according to the allocated
budget and fix faults if they are found, thereby increasing system reliability.
Our evaluation repeats the probabilistic process to test
which method gives the best reliability on average.

We developed \Astrahl, which implements the \ac{tra} algorithm
described in Section~\ref{sec:optim}.
It takes as input JSON descriptions of the fault tree and the
confidence functions (as parse trees), an initial condition (a float
for each basic event), and a budget (a float), and returns a splitting
of the budget between the different basic events (a float for each
basic event).

This section evaluates our claims and analyses \Astrahl's system confidence gains to other, more naive approaches.
We first ask how much confidence we can gain by using \Astrahl, rather
than simpler \ac{tra} approaches.
Then, we test whether using \Astrahl can increase system reliability
in practice.
Specifically, this section will investigate the following two research questions:
\begin{enumerate}
    \item[RQ1] Given a certain \ac{tra} budget, how much is the calculated confidence gain when using \Astrahl and how do these figures compare to alternative \ac{tra} methods?
    \item[RQ2] Does \Astrahl's gain in confidence translate to a gain
      in system reliability in a practical scenario where testing
      practice is simulated?
\end{enumerate}

\paragraph{Alternative \ac{tra} approaches}
There exist numerous solutions to test resource splitting, however
some of the most common ones are the uniform and proportional resource
allocation strategies (see \Cref{fig:naivetrap}), as they do not
require knowledge of the system structure or fault distribution.
\emph{Uniform} \ac{tra}, for instance, evenly distributes the
available resources among the candidate components.
This technique is completely agnostic of the current system and component confidences.
\emph{Proportional} \ac{tra} on the other hand aims to take current
component confidence into account and provide proportionally more
resources to components in which we have lower confidence.
Although it uses current confidences for resource allocation, the
system's structure is still not considered.

\newcommand{\UniformColor}{NavyBlue!75}
\newcommand{\PropColor}{YellowOrange}
\newcommand{\IterPropColor}{OliveGreen}

\begin{figure}
  \centering
    \begin{tikzpicture}[scale=0.01, thick,-latex]

        \pgfmathsetmacro{\yscale}{300}
        \pgfmathsetmacro{\offset}{5}

        \draw[] (-10, 0) node[left, black] {0.0} -- (575, 0) node[right] {\scriptsize resources};
        \draw[] (0, -10)  node[below, black] {0} -- (0, {\yscale + 10}) node[above] {\scriptsize confidence};
        \draw[-,gray, dashed, ultra thin] (0, \yscale) node[left, black] {1.0} -- (575, \yscale);

        \draw[-,semithick,domain=0:575,smooth,variable=\x,black] plot ({\x}, {(1 - 0.99 ^ (\x + 1)) * \yscale});
        \node[] at (500, \yscale+15) {\scriptsize $c(r) = 1 - 0.99^{(r + 1)}$};

        \foreach \xstep in {100,200,...,500}
        {
            \pgfmathsetmacro{\yvalue}{(1 - 0.99 ^ (\xstep + 1))}
            \draw[-,gray,dashed, ultra thin] (\xstep, {\yvalue * \yscale}) -- (\xstep, -5) node[below] {\tiny \xstep};
        }

        \foreach \xstep/\ylabel in {100/0.64,200/0.87,300/0.95}
        {
            \pgfmathsetmacro{\yvalue}{(1 - 0.99 ^ (\xstep + 1))}
            \draw[-,gray,dashed, ultra thin] (\xstep, {\yvalue * \yscale}) -- (\xstep, -5) node[below] {\tiny \xstep};
            \draw[-,gray,dashed, ultra thin] (0, {\yvalue * \yscale}) node[left] {\tiny \ylabel} -- ++(\xstep, 0);
        }

        \foreach \xstep in {50,150,250}
        {
            \draw[\UniformColor] (\xstep, {(1 - 0.99 ^ (\xstep + 1)) * \yscale + \offset}) -- ++(100,0);
        }

        \pgfmathsetmacro{\cf}{(1 - 0.99 ^ (50 + 1)) };
        \pgfmathsetmacro{\chf}{(1 - 0.99 ^ (150 + 1)) };
        \pgfmathsetmacro{\cthf}{(1 - 0.99 ^ (250 + 1)) };
        \pgfmathsetmacro{\vf}{300 * (1 - \cf) / ((1 - \cf) + (1- \chf) + (1- \cthf))};
        \pgfmathsetmacro{\vhf}{300 * (1 - \chf) / ((1 - \cf) + (1- \chf) + (1- \cthf))};
        \pgfmathsetmacro{\vthf}{300 * (1 - \cthf) / ((1 - \cf) + (1- \chf) + (1- \cthf))};
        \draw[\PropColor, thick]  (050, {\yscale *     \cf }) -- ++(\vf,0);
        \draw[\PropColor, thick]  (150, {\yscale *   \chf }) -- ++(\vhf,0);
        \draw[\PropColor, thick]  (250, {\yscale * \cthf }) -- ++(\vthf,0);

        \foreach \xstep in {50,150,250}
        { \node at (\xstep, {(1 - 0.99 ^ (\xstep + 1)) * \yscale}) {\textbullet};
        }

        \matrix [draw=black!50,thin,fill=white,above left,inner sep=0.2em] at (550, 5) {
            \draw[\UniformColor,thick] (0,0) -- ++(.5,0) node[black,right] {\tiny Uniform}; \\
            \draw[\PropColor,thick] (0,0) -- ++(.5,0) node[black,right] {\tiny Proportional}; \\
        };

    \end{tikzpicture}
    \caption{Allocation of test budget according to common strategies}
    \label{fig:naivetrap}
\end{figure}
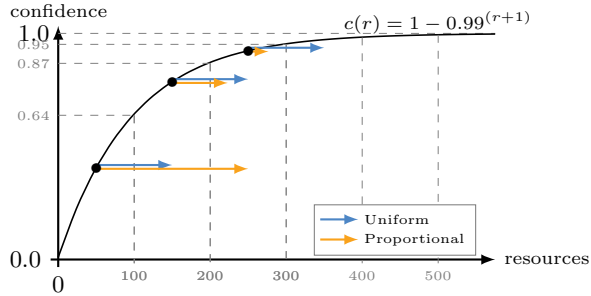

\subsection{RQ1: Theoretical Evaluation}
Naive approaches might coincidentally be equally good as elaborate
techniques, given the right system architecture and initial
confidences.
We therefore chose to perform our comparison on a set of randomly generated initial confidences (later referred to as \acp{sp}) and \acfp{ft}.
It is natural to expect \Astrahl's insight into system structure and
component confidences to outperform naive strategies as the systems
grow in size and complexity.\footnote{Functional optimisation may not
be as efficient in larger dimensions, but even a naive estimate should
give a better result than completely ignoring system structure.}
Therefore, we only verify \Astrahl's superiority on relatively
small systems and simple confidence functions.
We thus fixed an \ac{ft} size of six devices connected by five binary
\texttt{AND} and \texttt{OR} gates.
Furthermore, confidences behave according to the function $c(r) = 1 -
0.99^{(r + 1)}$ for all devices, where $r$ represents the
invested resources and $c$ the confidence, as displayed in
\Cref{fig:naivetrap}.

Using these settings, we generated 200 \acp{ft} and instantiated each
with 100 random \acp{sp} in the range of 100 to 300, corresponding to
initial confidences between approximately $0.64$ and $0.95$.
Using this data set we let \Astrahl and its competitors distribute
total budgets of size $1$, $10$, $50$, $100$, $250$, $500$, and
$1000$.

\subsection{RQ2: Empirical Evaluation}

To address RQ2 it is necessary to create an evaluation setting that
allows the simulated distribution of (hidden) component faults, their
(potential) discovery through testing or experimentation effort and
subsequent removal, and finally a calculation of the system confidence
based on the remaining, undiscovered faults.
Our approach is based on the probabilistic creation of \acp{fd}, \ie
assignments of faults to components according to their respective
confidences.
These faults will be probabilistically found and removed by allocating resources to a component, simulating \eg experimentation or testing.
Our hypothesis is that, given initial component confidences that
reflect the components' reliabilities, \Astrahl should be able to
outperform its competitor algorithms and on average
lead to higher overall system confidence.

The evaluation process is split into three phases.
First, faults are assigned to components according to geometric
distributions with parameter $p = 1 - c$, where $c$ is our initial
confidence in the component.
Therefore, components in which we have more confidence will on average
contain fewer faults.
Next, the faults are removed probabilistically during a ``testing
phase'' as follows.
We arbitrarily assume that each test costs 10 resources\footnote{It
would equally be possible to assume a test costs one resource and
scale the budget.}.
Each fault has an observability of $0.1$, \ie each test
has a $10\%$ chance to detect this particular fault.
When a \ac{tra} strategy assigns $r$ resources to a component with $n$ faults, $t = \lfloor\frac{r}{10}\rfloor$ full tests are run on it.
Each test has a $10\%$ chance to find and remove each of a component's $n$ faults.
If $r > 10 t$, \ie there is remaining budget, a ``partial'' test is
run with proportionally reduced chance to find faults.
After this phase, we end up with $n'$ faults in each component.
Finally, the system fault probability is calculated.
As above, during operation each fault's observability is $0.1$, so a
component's failure probability can be calculated as $1 - 0.9^{n'}$
if it contains $n'$ faults.
The entire system's failure probability can then be calculated using
all components' failure probabilities and the standard propagation of
fault probabilities in \acp{ft}.

Due to the probabilistic nature of this evaluation we repeated this
process for 50 \acp{ft}, 50 \acp{sp} (range 10 to 70)\footnote{We
used smaller confidence so that components will usually contain
faults.} for each \ac{ft} and 50 random \acp{fd} for each \ac{sp},
totalling to $125{,}000$ \acp{fd}.
We computed test resource allocations using test budgets of $60$,
$120$, $240$, $360$, $480$ and $600$, executed the testing and fault
removal process $100$ times for each \ac{fd} and test budget, and
subsequently calculated the average \ac{ft} failure probability, for a
total of $75{,}000{,}000$ computations.

\subsection{Evaluation Results}

The experiment results for RQ1 are shown in \Cref{tab:theoreticalExperiment} and \Cref{fig:theoreticalexperiment}.
\Cref{tab:theoreticalExperiment} shows the average system reliability
according to the total budget and relative difference to \Astrahl's
score $\frac{(1-r)-(1-r')}{1-r} = \frac{r'-r}{1 - r}$, where $r$ is
the reliability computed by \Astrahl, and $r'$ that computed by its
competitor (this measure is closer to intuition than $(r-r')/r$ when
both confidences are close $1$).
The error bars in \Cref{fig:theoreticalexperiment} represent mean
squared error in system reliability.
Note that \Astrahl outperforms each of the competitors independent of the budget size.
It is also noteworthy that although with higher budgets the system becomes very reliable independent of the strategy, the relative performance increase of \Astrahl when compared to its competitors grows significantly.
In other words, spending a large amount of resources increases
obviously the system performance, but it is still better to follow
\Astrahl's suggestions.

\begin{figure}
    \begin{subfigure}[c]{0.46\textwidth}
      \centering
         { \scriptsize
            \begin{tabular}{r @{}c c @{}c c r @{}c c r}
                \toprule
                & \phantom{i} &   Astrahl &\phantom{i} & \multicolumn{2}{c}{Uniform} & \phantom{i} & \multicolumn{2}{c}{Proportional}\\
                \cmidrule{3-3}\cmidrule{5-6}\cmidrule{8-9}
                Budget & &Score&&Score&Diff \%&&Score&Diff \%\\
\midrule
                1~~    &&   .8445 &&   .8442 &           -0.19 &&  .8442 &        -0.19 \\
                10~~   &&   .8498 &&   .8465 &           -2.20 &&  .8471 &        -1.80 \\
                50~~   &&   .8697 &&   .8565 &          -10.13 &&  .8593 &        -7.98 \\
                100~~  &&   .8884 &&   .8682 &          -18.10 &&  .8729 &       -13.89 \\
                250~~  &&   .9226 &&   .8976 &          -32.30 &&  .9053 &       -22.35 \\
                500~~  &&   .9544 &&   .9329 &          -47.15 &&  .9400 &       -31.58 \\
                1000~~ &&   .9812 &&   .9711 &          -53.72 &&  .9730 &       -43.62 \\
                \bottomrule
            \end{tabular}
        }
    \caption{~}
    \label{tab:theoreticalExperiment}
    \end{subfigure}\hfill
    \begin{subfigure}[c]{0.53\textwidth}
      \centering
    \scalebox{.52}{
      \input{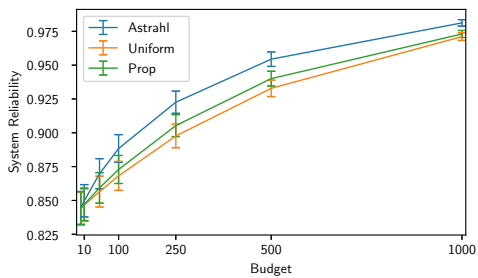}
    }
    \caption{~}
    \label{fig:theoreticalexperiment}
    \end{subfigure}
  \caption{Theoretical evaluation: average system reliability and relative difference}
\end{figure}

\Cref{tab:empiricalExperiment} and \Cref{fig:empiricalexperiment} display the results of the empirical evaluation (RQ2).
Here, the error bars correspond to mean squared error of the average
over all \acp{fd} (for each \ac{sp}).
As can be seen, also here \Astrahl outperforms other \ac{tra} strategies.
Interestingly though, \Astrahl's relative advantage is not as high.
An initial investigation suggests the cause for this observation at the discrete nature of the evaluation setting,
where in many cases all faults of a component are removed, which leads to full confidence in this component.

\begin{figure}
    \begin{subfigure}[c]{0.46\textwidth}
      \centering
        { \scriptsize
            \begin{tabular}{r @{}c c @{}c r r @{}c r r}
                \toprule
                & \phantom{i} &   Astrahl &\phantom{i} & \multicolumn{2}{c}{Uniform} & \phantom{i} & \multicolumn{2}{c}{Proportional}\\
                \cmidrule{3-3}\cmidrule{5-6}\cmidrule{8-9}
                Budget & &Score&&Score&Diff \%&&Score&Diff \%\\
\midrule
               60~~  &&   .8982 &&   .8890 &           -9.04 &&  .8887 &        -9.33 \\
               120~~ &&   .9146 &&   .9000 &          -17.10 &&  .8995 &       -17.68 \\
               240~~ &&   .9380 &&   .9188 &          -30.97 &&  .9179 &       -32.42 \\
               360~~ &&   .9541 &&   .9341 &          -43.57 &&  .9329 &       -46.19 \\
               480~~ &&   .9657 &&   .9466 &          -55.69 &&  .9451 &       -60.06 \\
               600~~ &&   .9743 &&   .9567 &          -68.48 &&  .9550 &       -75.10 \\
                \bottomrule
            \end{tabular}
        }
        \caption{~}
        \label{tab:empiricalExperiment}
    \end{subfigure}\hfill
    \begin{subfigure}[c]{0.51\textwidth}
      \centering
        \scalebox{.52}{
            \input{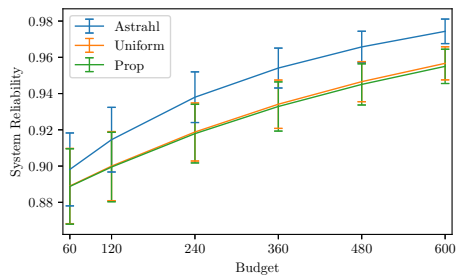}
        }
        \caption{~}
        \label{fig:empiricalexperiment}
    \end{subfigure}
    \caption{Empirical evaluation: system reliability and relative difference}
\end{figure}

Summarising our evaluations it can be said that \Astrahl is better-suited for identifying where to place test effort than alternative approaches.
The experiments show significant relative gains in both theoretical and practical approaches, even for rather small, straightforward systems as in our setting.
For more complex systems, we expect \Astrahl's insight into the
system's structure and the components' confidence should make its
advantage even clearer, although this has yet to be validated by
experimental results.

\section{Conclusion and Future Work}
We have defined \emph{\acl{qcl}}, which represents confidence in
assertions and have argued that this logic can help us take system architecture
into account when solving the \acf{trap} and shown the validity of the
approach through experimental results.
We have also argued that this approach is widely applicable, \eg,
because it does not rely on particular assumptions about fault
distributions.

The simplicity and versatility of our approach makes it possible to
tackle different problems with the same ingredients.
An obvious possible future work is to study the \ac{trap} in different
settings, for example by implementing multi-objective optimisation, or
by studying it in a broader setting, where the confidence gained by
running a test depends on the result of the test.
We should also experimentally validate our expectation on the scalability of our approach in industry-scale case studies.
It would also be interesting to see how solving the \ac{trap} when
optimising the total confidence $t+f$ compares to solving it with the
current setting, especially on volatile systems.
Another possible direction is to study how this approach can be used
to solve test prioritisation between different components of a system.

We also want to investigate the logic itself more thoroughly from a
purely logical point of view.
For example, by changing the interpretation of connectives in
three-valued logic, or using different T-norms and T-conorms in the
definitions of the rules.
Another interesting aspect would be to investigate its links with
fuzzy logics and Dempster-Shafer theory deeper, as there seems to be
some deep connections.
In particular, ties to fuzzy logics would give a bridge between a
logic about confidence and a logic about truth, which could help us
develop \ac{qcl} further.

\bibliographystyle{splncs04}
\bibliography{main}

\end{document}